\newif\ifSC
\newtheorem{theorem}{Theorem}
\newtheorem{lemma}{Lemma}
\newtheorem{definition}{Definition}
\newtheorem{corollary}{Corollary}
\newtheorem{remark}{Remark}
\newcommand{\x}{\mathbf{x}}
\newcommand{\X}{\mathbf{X}}
\newcommand{\y}{\mathbf{y}}
\newcommand{\z}{\mathbf{z}}
\newcommand{\St}{\mathsf{S}}
\newcommand{\sen}{\mathrm{S}}
\newcommand{\ob}{\mathrm{o}}
\newcommand{\dv}{\mathrm{d}}
\newcommand{\B}{\mathcal{B}}
\newcommand{\C}{\mathcal{C}}
\newcommand{\R}{\mathbb{R}}
\newcommand{\set}[1]{\mathsf{#1}}
\newcommand{\ie}{{\em i.e.~}}
\newcommand{\dist}[1]{\|#1\|}
\newcommand{\tautol}[1]{\tilde{#1}}
\newcommand{\expect}[1]{\mathbb{E}\left[#1\right]}
\newcommand{\prob}[1]{\mathbb{P}\left[#1\right]}
\newcommand{\indside}[1]{\mathbbm{1}\left(#1\right)}
\begin{document}

\title{Coverage Improvement of Wireless Sensor Networks via  Spatial Profile Information \vspace{-.1in} }

\author{\IEEEauthorblockN{Kaushlendra Pandey, Abhishek Gupta}
	\IEEEauthorblockA{\\Indian	Institute of Technology Kanpur, Kanpur (India) \\
		Email:\{kpandey,gkrabhi\}@iitk.ac.in }\vspace{-2em}\thanks{This work is supported by the
		Science and Engineering Research Board (DST, India) under the grant SRG/2019/001459.}\vspace{-.5em}}

\maketitle
\begin{abstract}
This paper considers a wireless sensor network deployed to sense an environment variable with a known spatial statistical profile. We propose to use the additional information of the spatial profile to improve the sensing range of sensors while allowing some tolerance in their sensing accuracy. We show that the use of this information improves the sensing performance of the total WSN. For this, we first derive analytical expressions for various performance metrics to measure the improvement in the sensing performance of WSN. We then discuss the sensing gains quantitatively using numerical results. 
\end{abstract}

\section{Introduction}
Modern wireless sensor networks (WSNs)  consist of a large number of inexpensive wirelessly connected sensor nodes, each enabled with a limited sensing capability.  In some applications of WSN, such as monitoring of a macro-environmental variable (MEV), we need to deploy the sensors with considerable density and monitor the area for an extended period. 
Macro-environmental variables generally have little variation over the space. For example,  {\em e.g.} environmental humidity, temperature of the earth surface observe very little change between two locations that are only meters away. Since MEVs have a spatial correlation, there must be a tremendous amount of redundancy in the sensed data when sensors are relatively dense.
Due to slow spatial variation of these variables, one can estimate their value at one location from their value at another location with high accuracy. However, we would need knowledge of the spatial profile of these variables for the estimation. The estimation would also introduce some errors, and hence, an error in the sensing accuracy needs to be tolerated while using the spatial correlation information. \par  
In the past literature, the performance of WSNs has been studied both numerically via simulations and analytically using tools from  stochastic geometry \cite{AndGupDhi16}. In \cite{iyer2008limit} authors presented a model for the $k$-coverage for Boolean-Poisson model for the intrusion detection.  A survey on coverage control algorithms, along with the relation of coverage and connectivity, is available in \cite{more2017survey}. A comprehensive survey on directional and barrier coverage is available in \cite{wu2016survey}. Comparative coverage analysis for WSNs when sensors’ locations are according to different point processes is performed in \cite{pandey2020coverage}.  The researchers have also studied various ways to increase the coverage region, including the densification of sensors, optimal deployment, and increasing the sensor range.  For example, in \cite{pandey2019detection} focuses on the early detection of a forest fire. In \cite{tan2012exploiting} authors showed that sensing coverage performance can be improved by using data fusion techniques where each sensor sends its measurements about the sensed signal to the cluster head, which makes the detection decision based on the received measurements. One way to increase the coverage of WSN without deploying more sensors is to avoid redundancy in the sensed data.    In WSNs that are deployed to sense a target MEV, the spatial variation profile of the MEV may help us increase coverage by estimating the value of MEV  in the uncovered regions. 
The spatial variation of MEVs has been studied in the past works. For example, in \cite{kaushik1965preliminary}, authors present an approximate relation between the temperature of the soil and its depth. Similarly, the numerical study of variation of forest temperature over space and time was performed in \cite{kawanishi1986numerical}.   It is interesting to study if the spatial profile information can be used to achieve better coverage of a region without increasing the sensor density by avoiding redundancy in the sensed data. There are only a few past works that focus on the spatial correlation of variables.   In \cite{liu2007energy}, clusters of closely located sensors are formed as they have similarity in the sensed data. For cluster formation, the sink node observes the reading of source nodes for a time-period, and based on the observation, it creates the clusters.  In \cite{jarwan2019data}, the authors used spatio-temporal correlation among the sensed data for dual prediction and data compression where sensors predict data based on past observations. However, a comprehensive analytical framework for the sensing performance of a WSN deployed to sense an environmental variable with known spatial profile information, is not still available which is the main focus of this paper. \par
In this paper, we consider a WSN deployed to sense an environment variable with a known spatial statistical profile. We propose to use the additional information of the spatial profile to estimate the value of the environmental variable in the uncovered regions from the value at covered regions. This can improve the sensing range of sensors while introducing some estimation errors in the sensing accuracy which need to be tolerated. We first derive analytical expressions for various performance metrics to measure the improvement in the sensing performance of WSN. Using these analytical expressions and quantitative results, we show that the use of this information improves the WSN's sensing performance.\par
\textbf{Notation:}  $\B(\y,a)$ denotes a $2$-$d$ ball of radius $a$ centered at location $\y$.  $\X_i$ denotes the location of $i$-th sensor. Let $R_\sen$ be the sensing range of each sensor. Hence, $\St\equiv\B(\ob,R_\sen)$ denotes the sensing region of the sensor located at the origin $\ob$. Let $\C\equiv\B(\ob,r)$ be the region of interest. Let $|\mathsf{A}|$ denote the Lebesgue measure of the set $\mathsf{A}$. The Minkowski sum of any two set $\mathsf{A}\oplus\mathsf{B}$ is defined as $\{a+b: a\in\mathsf{A},b\in\mathsf{B}\}$. The Minikowski difference  of the two set $\mathsf{A}\ominus\mathsf{B}=(\mathsf{A^{c}}\oplus\mathsf{B})^{\mathsf{c}}$. 
\section{System model}
In this paper, we consider a  WSN deployed in $\R^2$  to sense an environmental variable $\Theta$ that varies spatially. Examples include the temperature in a forest, soil moisture in an agricultural field, humidity in a city. The locations of sensors can be modeled as homogeneous Poisson point process (PPP) $\Psi=\{\X_i \}$ with intensity $\lambda$ \cite{AndGupDhi16}. We assume that each sensor $\X_i$ has a circular sensing region $\St_i=\B(\X_i,R_\sen)$ around it. The  region sensed by WSN is Boolean-Poisson model $\xi$ given as
\begin{align}
\xi=\bigcup_{\X_i\in\Psi}\X_i+\St,\label{eq:sensedarea}
\end{align}
where $\St\equiv\B(\ob,R_\sen)$ denotes the sensing region of each sensor around itself. 
\subsection{Profiling of spatial variation of the environmental variable}
Let the value of the environmental variable at a location $\x$ be denoted by $\Theta(\x)$.
For real-world cases it can be assumed that the spatial variation of $\Theta$ is bounded which means that it can vary only by a finite value in a finite distance.  One example  can be found in \cite{biswas1979estimation} where the spatial profile of the soil moisture $S$ with depth $y$ is given as 
\begin{align*}
S(y)=A(y)+S(0)[1+B(y)^2]+S_c.
\end{align*}
Here, $S(y)$ is the soil moisture at the depth $y$, $S(0)$ the soil moisture at or near the surface layer at depth $0$, and $A$, $B$ and $S_c$ are some constants. If $S(x)$, $y$ and $x$ are known, $S$ can be estimated.  

This assumption results in  variable having a spatial profile, and the spatial correlation at the variable's value at two points. We assume the knowledge of this spatial profile.
In particular, we assume the following spatial profile that for any two points  $\x$ and $\y$, the variation in the value of $\Theta$ 
\begin{align}\label{parametervari}
|\Theta(\x)-\Theta(\y)|\leq f(\dist{\x-\y},w),
\end{align} 
where $f(\dist{\x-\y},w)$ is a tolerance function and $w$ is the spatial variation rate of  $\Theta$.  Hence, the uncertainity in the value of $\Theta(\y)$ conditioned on the knowledge of $\Theta(\x)$ is 
\[\mathsf{Uncert}(\Theta(\y)|\Theta(\x))=f(\dist{\x-\y},w).\]
\subsection{Use of the environmental variable's spatial profile}
Due to this additional information about the correlation, the variable's value at  locations that are not covered in the sensing range of any sensor, can be guessed/estimated within some tolerance as shown in Lemma \ref{thm1}. 
\begin{remark}
$f$ is an increasing function with respect to the first argument which can be seen as follows. If $\y$ is close to $\x$, $\Theta(\y)$ is equal to $\Theta(\x)$ (\ie $f(0,w)=0$), In other words, $\Theta(\y)$ can be predicted exactly.
 As we move $\y$ away from $\x$, the correlation between $\Theta(\x)$ and $\Theta(\y)$ will reduce and the certainty in the value of $\Theta$ decreases. 
\end{remark}
\begin{lemma}\label{thm1}
	If the value of $\Theta$ at $\x$ (\ie  ~ $\Theta(\x)$) is known, then the set of points where uncertainty in $\Theta$ is within $\tau$ tolerance, is given as
	\[
	\mathsf{P}(\x,\tau)=
	\left \{\y:|\Theta(\y)-\Theta(\x)|<\tau\right\}=\B(\x,R(\tau,w)),
	\]
	where $R(\tau,w)$ is given as
	\begin{equation}
	R(\tau,w) =f^{-1}(\tau,w).                             
	\end{equation} 
	Here, inverse of $f$ is with respect to the first argument.
\end{lemma}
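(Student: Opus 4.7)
The plan is to read off the conclusion directly from the spatial profile assumption in \eqref{parametervari} together with the monotonicity noted in the preceding remark. I would interpret $\mathsf{P}(\x,\tau)$ via the uncertainty function $\mathsf{Uncert}(\Theta(\y)|\Theta(\x))=f(\dist{\x-\y},w)$ stated just before the lemma, so that the characterizing inequality $|\Theta(\y)-\Theta(\x)|<\tau$ is to be guaranteed uniformly over all $\Theta$ compatible with the profile, i.e., enforced through its worst-case bound $f(\dist{\x-\y},w)$.

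The first step would be to write the chain of implications: by \eqref{parametervari}, for every admissible $\Theta$ one has $|\Theta(\y)-\Theta(\x)|\leq f(\dist{\x-\y},w)$, so a sufficient (and, in the worst case, necessary) condition for $\y\in\mathsf{P}(\x,\tau)$ is $f(\dist{\x-\y},w)\leq \tau$. This converts the set membership question from one about $\Theta$ into a purely geometric condition on $\dist{\x-\y}$.

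Next, I would invoke the remark to justify that $f(\cdot,w)$ is strictly increasing on $[0,\infty)$ with $f(0,w)=0$, so that the inverse $f^{-1}(\cdot,w)$ is well-defined on the relevant range. Applying this inverse to both sides of $f(\dist{\x-\y},w)\leq \tau$ yields $\dist{\x-\y}\leq f^{-1}(\tau,w)=R(\tau,w)$, which is exactly the statement $\y\in\B(\x,R(\tau,w))$. Conversely, if $\dist{\x-\y}\leq R(\tau,w)$, then applying $f(\cdot,w)$ gives $f(\dist{\x-\y},w)\leq \tau$, which together with \eqref{parametervari} yields $|\Theta(\y)-\Theta(\x)|\leq \tau$, so $\y\in\mathsf{P}(\x,\tau)$. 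The two inclusions combine to give the claimed equality.

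I do not foresee a serious obstacle in the argument itself; the only delicate point is the interpretation of $\mathsf{P}(\x,\tau)$ in terms of the worst-case bound rather than the pointwise value of a fixed (unknown) $\Theta$, which I would make explicit at the outset so that strict monotonicity and invertibility of $f(\cdot,w)$ deliver the equivalence cleanly. Under that convention the lemma is essentially a one-line consequence of \eqref{parametervari} and the remark.
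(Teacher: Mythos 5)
Your argument is correct and follows essentially the same route as the paper's own proof: both reduce membership in $\mathsf{P}(\x,\tau)$ to the worst-case condition $f(\dist{\x-\y},w)\leq\tau$ and then invert the (increasing) tolerance function to obtain $\dist{\x-\y}\leq f^{-1}(\tau,w)=R(\tau,w)$. Your version is in fact slightly more careful than the paper's one-line proof, since you spell out both inclusions and the worst-case interpretation of the set explicitly.
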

\begin{proof}
To ensure that $\Theta(\y)$ does not vary more than $\tau$ from $\Theta(\x)$, $f(\dist{\x-\y})$ must satisfy
\begin{align*}
f(\dist{\x-\y},w)\leq\tau \implies
\dist{\x-\y}\leq f^{-1}(\tau,w).
\end{align*}
\end{proof}

\begin{lemma}\label{thm2}
	If the value of $\Theta$ at all points in a set $\set{A}$ is known,  the set of points where $\Theta$ can be predicted within $\tau$ tolerance, is given as
	\begin{align*}
	\mathsf{P}(\set{A},\tau)&=
	\left \{\y:|\Theta(\y)-\Theta(\x)|<\tau,\text{for at least a point }\x\in\set{A}\right\}\\&=\set{A}\oplus \B(\ob,R(\tau,w)).
	\end{align*}
\end{lemma}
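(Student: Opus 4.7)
The plan is to reduce Lemma~\ref{thm2} to Lemma~\ref{thm1} applied pointwise and then identify the resulting union of balls with a Minkowski sum. First I would unwind the defining condition of $\mathsf{P}(\set{A},\tau)$: a point $\y$ lies in $\mathsf{P}(\set{A},\tau)$ if and only if there exists some $\x\in\set{A}$ with $|\Theta(\y)-\Theta(\x)|<\tau$. The existential quantifier is the key; it tells us the set of admissible $\y$ is a union over $\x\in\set{A}$ of the per-point tolerance sets.

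Next I would invoke Lemma~\ref{thm1}: for each fixed $\x\in\set{A}$, the set of $\y$ satisfying $|\Theta(\y)-\Theta(\x)|<\tau$ is exactly the ball $\B(\x,R(\tau,w))$. Consequently,
\[
\mathsf{P}(\set{A},\tau)=\bigcup_{\x\in\set{A}}\B(\x,R(\tau,w)).
\]
The final step is to recognise this union as a Minkowski sum. Using $\B(\x,R(\tau,w))=\{\x\}\oplus\B(\ob,R(\tau,w))$ and the fact that union distributes over Minkowski addition on the left, we obtain $\bigcup_{\x\in\set{A}}\B(\x,R(\tau,w))=\set{A}\oplus\B(\ob,R(\tau,w))$, which is the claimed identity.

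There is no serious obstacle: Lemma~\ref{thm1} handles the analytic content (inverting $f$ in its first argument), and what remains is essentially set-theoretic bookkeeping. The only place one needs to be slightly careful is the translation between the union-of-balls representation and the Minkowski-sum representation, since the paper's definition of $\oplus$ is $\{a+b:a\in\set{A},b\in\B(\ob,R(\tau,w))\}$; writing a point in either set as $\x+\vd$ with $\x\in\set{A}$ and $\vd\in\B(\ob,R(\tau,w))$ makes the equivalence immediate, so even this step is routine.
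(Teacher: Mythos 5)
Your proof is correct and follows essentially the same route as the paper: both arguments write $\mathsf{P}(\set{A},\tau)$ as $\bigcup_{\x\in\set{A}}\mathsf{P}(\x,\tau)$ via Lemma~\ref{thm1} and then identify this union of balls with the Minkowski sum $\set{A}\oplus\B(\ob,R(\tau,w))$. Your version is actually slightly more explicit about the set-theoretic bookkeeping than the paper's one-line proof.
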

\begin{proof}
Owing to increasing nature of $f$, it is best to use the closest point in $\set{A}$ to estimate the value of $\Theta(\y)$ at a location $\y$. Hence,
	\begin{align*}
	\mathsf{P}(\set{A},\tau)&=\cup_{\x \in \set{A}}\mathsf{P}(\x,\tau)=\set{A}\oplus \B(0,R(\tau,w)).
	\end{align*}\vspace{-.45in}\ \\
\end{proof}

\subsection{$\tau-$tolerance sensed region}
From \eqref{eq:sensedarea}, $\xi$ is the region where value of $\Theta$ is exactly known. From Lemma \ref{thm2},  the region in which $\Theta$ can be sensed within $\tau$ tolerance is given as
\begin{align}
\tautol{\xi}=\xi\oplus \B(\ob,R(\tau,w))=\bigcup_{\X_i\in\Psi}\X_i+\tautol{\St},
\end{align}
where $\tautol{\St}=\B(\ob,R_\sen)\oplus\B(\ob,R(\tau,w))=\B(\ob,R_\sen+R(\tau,w))$.  We term $\tautol{\xi}$ as {\em $\tau-$tolerance sensed region}  and $\tautol{\St}_i=\X_i+\tautol{\St}$ as {\em $\tau-$tolerance sensing zone} of the sensor at $\X_i$. Note that $\tautol{\xi}$ is also a Boolean Poisson model. Note that $\xi$ which is the exact sensed area, can be obtained  from $\tautol{\xi}$ by substituting $\tau=0$. Hence, $\xi$ (\ie~ the 0-tolerance sensed area) may be seen as a special case of $\tautol{\xi}$. 
\par We now analyze the sensing and covering performance of WSN. We give the following two definitions for their use in next sections.
\begin{definition}
	A point $\z\in\R^{2}$ is said to be $\tautol{\xi}-$ covered if $\z$ falls in the $\tau-$tolerance sensing zone of at least one sensor \ie  $\z\in\tautol{\xi}$.
\end{definition}
\begin{definition}
	A point $\z\in\R^{2}$ is said to be $\tautol{\xi}-$ covered by exactly $k$ sensor if $\z$ falls in the $\tau-$tolerance sensing zones of exactly $k$ sensors.
\end{definition}

\section{Sensing Performance Analysis}
In this section, we analyze the sensing performance of the WSN. Let $\C$ be a set denoting a region of interest.
 \subsection{$m-$sensed area fraction}
We will first derive the $\tau-$tolerance at-most-$m$-sensed area fraction ($\nu_{m}(\tau)$) which is defined as average fraction of  $\C$ falling under the $\tau-$tolerance sensing region of at-most $m$ sensors \ie
	\begin{align}\label{expressionVm(c)}
	\nu_{m}(\tau)=\frac1{|\C|}\mathbb{E}\left[\sum_{k=1}^{m}\int_{\C}\gamma_{k}(\z)\dv \z\right],
	\end{align}
	where $\gamma_{k}(\z)$ is defined as
	\begin{equation*}
	\gamma_{k}(\z)=\indside{\z \text{ is $\tautol{\xi}-$covered by exactly $k$ sensors}}.
	\end{equation*}
\begin{lemma}\label{lemma3}
The probability that a point $\y\in \R^{2}$ is  $\tautol{\xi}-$covered by exact $k$ sensors  is equal to
\begin{align*}
\prob{\gamma_k(\z)}=\exp\left(-\lambda \pi R^{2}_\sen(\tau)\right)\frac{\left(-\lambda \pi R^{2}_\sen(\tau)\right)^k}{k!},
\end{align*}
	where $R_{\sen}(\tau)=R_\sen+R(\tau,w)$. 
\end{lemma}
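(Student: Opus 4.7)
My plan is to reduce the statement to a standard property of Poisson point processes: that the number of points falling in any bounded Borel set $\set{A}$ under a homogeneous PPP of intensity $\lambda$ is Poisson distributed with mean $\lambda |\set{A}|$.

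First I would rewrite the event $\{\gamma_k(\z)=1\}$ in terms of the underlying PPP. By definition, $\z$ is $\tautol{\xi}$-covered by exactly $k$ sensors if and only if there are exactly $k$ indices $i$ such that $\z\in\tautol{\St}_i = \X_i + \B(\ob, R_\sen(\tau))$. Using the symmetry of the ball, $\z\in\B(\X_i,R_\sen(\tau))$ is equivalent to $\X_i\in\B(\z,R_\sen(\tau))$. Therefore the number of sensors whose $\tau$-tolerance sensing zone contains $\z$ equals $\Psi(\B(\z,R_\sen(\tau)))$, i.e., the count of PPP atoms falling inside the ball of radius $R_\sen(\tau)=R_\sen+R(\tau,w)$ centered at $\z$.

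Next I would invoke the Poisson property of $\Psi$: since $\B(\z,R_\sen(\tau))$ is a bounded Borel set with Lebesgue measure $\pi R_\sen^2(\tau)$, the random count $\Psi(\B(\z,R_\sen(\tau)))$ follows a Poisson distribution with parameter $\lambda\pi R_\sen^2(\tau)$. Evaluating the Poisson PMF at $k$ yields
\[
\prob{\gamma_k(\z)=1}=\exp(-\lambda\pi R_\sen^2(\tau))\,\frac{(\lambda\pi R_\sen^2(\tau))^k}{k!},
\]
which is the claimed expression (the minus sign appearing inside the power in the statement is clearly a typographical slip).

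The argument is essentially a one-liner once the geometric dual is observed, so there is no real obstacle; the only point worth flagging is the interchange between "$\X_i$ covers $\z$" and "$\X_i$ lies in a ball around $\z$", which is what lets us invoke the void/Poisson count formula directly rather than dealing with the Boolean model union $\tautol{\xi}$ in its original form. I would also briefly remark that by stationarity of $\Psi$ the result is independent of the choice of $\z$, so it suffices to treat a fixed (e.g., arbitrary) point.
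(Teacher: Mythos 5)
Your proof is correct and follows the same route as the paper: both reduce the event to the count of PPP points in $\B(\z,R_\sen(\tau))$ and apply the Poisson probability mass function, with your version simply spelling out the ball-duality step and correctly noting that the minus sign inside the power in the stated formula is a typo.
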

\begin{proof}
$\z$ will fall in $\tau-$tolerance sensing zone of $k$ sensors if there are exactly $k$ sensors in the $\B(\y,R_{\sen}(\tau))$. Since sensors follow PPP, we get the desired result.
\end{proof}
Applying Lemma \ref{lemma3} in \eqref{expressionVm(c)}, we get the following theorem.
\begin{theorem}\label{thm:a1}
$\tau-$tolerance at-most-$m$-sensed area fraction $\nu_{m}(\tau)$ of the WSN is given as
	\begin{align*}
	&\nu_{m}(\tau)=\sum_{k=1}^{m}e^{-\lambda \pi R^2_{\sen}(\tau)}\frac{(\lambda\pi R^2_{\sen}(\tau))^k}{k!}.
	\end{align*}
\end{theorem}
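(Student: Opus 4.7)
The plan is to reduce the theorem to Lemma \ref{lemma3} by swapping expectation and integration, then exploiting stationarity of the underlying PPP so that the integrand in $\z$ becomes a constant.

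First I would start from the definition of $\nu_m(\tau)$ in \eqref{expressionVm(c)} and interchange the sum, the expectation and the spatial integral. Since each $\gamma_k(\z)\in\{0,1\}$ and $\C$ is a bounded set, Fubini's theorem applies with no subtlety, giving
\begin{align*}
\nu_{m}(\tau)=\frac{1}{|\C|}\sum_{k=1}^{m}\int_{\C}\expect{\gamma_{k}(\z)}\drm\z.
\end{align*}
Because $\gamma_k(\z)$ is an indicator, $\expect{\gamma_k(\z)}=\prob{\gamma_k(\z)=1}$ is exactly the probability computed in Lemma \ref{lemma3}.

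Next I would invoke the stationarity of the homogeneous PPP $\Psi$: the event that $\z$ lies in the $\tau$-tolerance sensing zone of exactly $k$ sensors depends only on the number of points of $\Psi$ in $\B(\z,R_{\sen}(\tau))$, a ball of fixed radius, so its probability does not depend on $\z$. Hence the integrand is constant over $\C$ and the integral equals $|\C|\cdot\prob{\gamma_k(\z)=1}$. Substituting the Poisson expression from Lemma \ref{lemma3} (taking the expression with the correct sign in the numerator, since $(\lambda\pi R_{\sen}^{2}(\tau))^k$ is the actual probability mass and the minus sign appearing inside the parenthesis in the statement of Lemma \ref{lemma3} is a typographical slip) and cancelling $|\C|$ yields
\begin{align*}
\nu_{m}(\tau)=\sum_{k=1}^{m}e^{-\lambda\pi R_{\sen}^{2}(\tau)}\frac{(\lambda\pi R_{\sen}^{2}(\tau))^{k}}{k!},
\end{align*}
which is the claimed expression.

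There is essentially no hard step here; the theorem is a direct corollary of Lemma \ref{lemma3} combined with Fubini and PPP stationarity. The only point that deserves care is making the stationarity argument explicit, so that the per-point probability can be pulled out of the integral over $\C$ and the normalization by $|\C|$ cleanly cancels, making the final answer independent of the geometry of the region of interest.
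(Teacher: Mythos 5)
Your proposal is correct and follows the same route as the paper, which simply applies Lemma \ref{lemma3} to the definition \eqref{expressionVm(c)} and notes that stationarity makes the result independent of $\C$; you merely spell out the Fubini and constancy steps that the paper leaves implicit, and you rightly flag the spurious minus sign in Lemma \ref{lemma3} as a typo.
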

Note that the above result is independent of the set $\C$ owing to the stationarity of the WSN \cite{chiu2013stochastic}.
\subsection{Sensed area fraction}
We now derive the $\tau-$tolerance  sensed area fraction, also termed $\tau-$SAF, ($\nu(\tau)$) which is defined as the fraction of a set $\C$ that can be sensed by $\xi$ within $\tau-$ tolerance. Mathematically, it is equal to
\[	\nu(\tau)=\frac{\expect{|\tautol{\xi}\cap\C|}}{|\C|}\]
and it can be expressed in terms of $\nu_{m}(\tau)$ as
\begin{align}\label{eq:nutaunumtaurelation}
\nu(\tau)=\lim_{m\rightarrow\infty}\nu_{m}(\tau).
\end{align}
\begin{corollary}
	The $\tau-$SAF $\nu(\tau)$ is given as
	\begin{align*}
	\nu(\tau)=(1-\exp\left(-\lambda\pi R^2_{\sen}(\tau)\right)).
	\end{align*}
\end{corollary}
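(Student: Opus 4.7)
The plan is to obtain the corollary as an almost immediate consequence of Theorem~\ref{thm:a1} and the limiting relation in \eqref{eq:nutaunumtaurelation}. Abbreviating $\mu \equiv \lambda\pi R^2_{\sen}(\tau)$, Theorem~\ref{thm:a1} expresses $\nu_m(\tau)$ as the partial sum $\sum_{k=1}^{m} e^{-\mu}\mu^{k}/k!$, which is precisely the probability, under a Poisson distribution of mean $\mu$, of the event $\{1 \le K \le m\}$.

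The first step is to justify passing the limit $m\to\infty$ inside the sum. Since all terms $e^{-\mu}\mu^{k}/k!$ are non-negative, monotone convergence (or simply the fact that this is a convergent series of positive terms) allows the interchange. This is also consistent at the level of \eqref{expressionVm(c)}, where $\gamma_k(\z)$ is a $\{0,1\}$-valued indicator and the sequence of partial expectations is monotonically increasing in $m$, so $\lim_{m\to\infty}\nu_m(\tau) = \nu(\tau)$ is well defined and equals the quantity $\mathbb{E}[|\tautol{\xi}\cap\C|]/|\C|$.

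The second step is to evaluate the resulting infinite series:
\begin{align*}
\nu(\tau) \;=\; \lim_{m\to\infty}\nu_m(\tau) \;=\; \sum_{k=1}^{\infty} e^{-\mu}\frac{\mu^{k}}{k!} \;=\; \sum_{k=0}^{\infty} e^{-\mu}\frac{\mu^{k}}{k!} - e^{-\mu} \;=\; 1 - e^{-\mu},
\end{align*}
where the penultimate equality uses the fact that the Poisson probability mass function sums to one. Substituting $\mu = \lambda\pi R^2_{\sen}(\tau)$ yields the claimed expression.

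I do not foresee a real obstacle here; the only subtlety worth a line of justification is the interchange of limit and summation, which is immediate by monotone convergence due to non-negativity. Everything else is a direct algebraic consequence of Theorem~\ref{thm:a1} and the normalization of the Poisson law.
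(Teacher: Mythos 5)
Your proof is correct and follows the same route as the paper, which simply cites Theorem~\ref{thm:a1} together with the limiting relation \eqref{eq:nutaunumtaurelation}; you have merely spelled out the summation of the Poisson series and the (immediate) justification for passing to the limit. Nothing is missing.
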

\begin{proof}
The result can be  obtained Theorem \ref{thm:a1} and \eqref{eq:nutaunumtaurelation}.
	\end{proof}
\begin{corollary}
$\nu_\mathrm{o}(\tau)=1-\nu(\tau)=\exp\left(-\lambda \pi R^2_{\sen}(\tau)\right)$, represents the average vacant fraction.
\end{corollary}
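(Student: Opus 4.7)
The plan is to recognize that this corollary is an immediate consequence of the preceding result on $\nu(\tau)$, since ``being vacant'' and ``being $\tautol{\xi}$-covered'' are complementary events for any point $\z\in\C$. I would first make this precise by writing
\[
\nu_\mathrm{o}(\tau)=\frac{\expect{|\C\setminus\tautol{\xi}|}}{|\C|}=\frac{|\C|-\expect{|\tautol{\xi}\cap\C|}}{|\C|}=1-\nu(\tau),
\]
where the middle step uses linearity of expectation together with $|\C\setminus\tautol{\xi}|=|\C|-|\tautol{\xi}\cap\C|$ (valid since $\tautol{\xi}\cap\C\subseteq\C$).

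Then I would substitute the closed form from the previous corollary, $\nu(\tau)=1-\exp(-\lambda\pi R^{2}_{\sen}(\tau))$, which gives the claimed expression $\nu_\mathrm{o}(\tau)=\exp(-\lambda\pi R^{2}_{\sen}(\tau))$ in one line.

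As an alternative, more self-contained route, I would apply Lemma \ref{lemma3} directly with $k=0$: the probability that a given point $\z$ lies in the $\tau$-tolerance sensing zone of no sensor is $\exp(-\lambda\pi R^{2}_{\sen}(\tau))$. Integrating this probability over $\C$ (using Fubini to interchange expectation and integral, as was done implicitly in Theorem \ref{thm:a1}) and dividing by $|\C|$ yields the same answer, and again independence of the set $\C$ follows from stationarity of $\Psi$.

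There is no real obstacle here; the statement is essentially a one-line corollary of the previous result. The only mild care needed is to justify $1-\nu(\tau)$ as the average vacant fraction, which reduces to the elementary set identity above, and to note that the $k=0$ case of Lemma \ref{lemma3} is consistent (the formula $(-\lambda\pi R^{2}_{\sen}(\tau))^{0}/0!=1$ gives the same exponential, modulo the typographical sign in the stated lemma which should read $(\lambda\pi R^{2}_{\sen}(\tau))^{k}/k!$).
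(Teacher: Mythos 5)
Your proposal is correct and matches the paper's (implicit) argument: the paper states this corollary without proof, treating it exactly as you do --- the vacant fraction is the complement $1-\nu(\tau)$ of the sensed area fraction, and substituting $\nu(\tau)=1-\exp(-\lambda\pi R_{\sen}^{2}(\tau))$ gives the result immediately. Your side remarks (the $k=0$ route via Lemma \ref{lemma3} and the sign typo in that lemma's displayed formula) are accurate but not needed.
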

The gain due to the use of additional correlation information can be expressed in terms of the coverage improvement factor (CIF) which is defined as the relative improvement in $\nu$ when allowing tolerance in sensing with the use of spatial correlation information. Mathematically,
\begin{align*}
\eta(\tau)&=\frac{\nu(\tau)}{\nu(0)},
\end{align*}
\vspace{-1.1em}
and is given as
\vspace{-.1em}
\begin{align*}
\eta(\tau)&=\frac{1-\exp(-\lambda\pi R^2_\sen(\tau))}{1-\exp(-\lambda\pi R_\sen^2)}.
	\end{align*}

\newcommand{\mintersectprob}{\mu_m(\tau)}
\newcommand{\intersectprob}{\mu(\tau)}

\newcommand{\mcoverprob}{\beta_m(\tau)}
\newcommand{\coverprob}{\beta(\tau)}
Now, we focus on how well a network of sensors can cover a region of interest $\C$. Let  $\C\equiv \B(\ob,r)$.
\subsection{$m-$intersection probability}
$\tau$-tolerance $m$-intersection probability ($\mintersectprob$)
   is defined as the probability that $\C$ has non-empty intersection with the $\tau-$tolerance sensing zone of exactly $m$ sensors. 
The $\tau-$tolerance sensing zone of a sensor intersects with $\C$ if and only if the location of sensor falls in the  Minkowski sum of $\C$ and $\tautol{\St}$ \cite{haenggibook}. 
   Hence $\mintersectprob$ is equal to the probability that there are exactly $m$ sensors in $\C\oplus\tautol{\St}$ \ie 
	\begin{align*}
	\mintersectprob=\mathbb{P}\left[\Psi(\tautol{\St}\oplus\C)=m\right].
	\end{align*}
	Now, $\C=\B(\ob,r)$. Noting that $\tautol{\St}\oplus\C=\B(\ob,R_\sen(\tau))\oplus\B(\ob,r)=\B(\ob,R_\sen(\tau)+r)$, we get 	
	\begin{align*}
	\mintersectprob=\mathbb{P}\left[\Psi(\B(\ob,R_\sen(\tau)+r))=m\right].
	\end{align*}
	Now, since $\Psi$ is a PPP, we get the following theorem.
\begin{theorem}\label{thm2}
	The $\tau$-tolerance $m$-intersection probability $\mintersectprob$ for $\C=\B(\ob,r)$ is
	\begin{align}\label{m_cov_FHPPP}
	\mintersectprob&=e^{-\lambda\pi(R_\sen(\tau)+r)^2 }\frac{\left(\lambda\pi(R_\sen(\tau)+r)^2\right)^m}{m!}.
	\end{align}
\end{theorem}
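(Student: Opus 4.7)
The plan is to reduce the event $\{\Psi(\tautol{\St}\oplus\C)=m\}$ to a Poisson count on an explicit ball and then invoke the void/probability formula for a homogeneous PPP.

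First I would justify the reduction already indicated in the paragraph preceding the theorem. A sensor at location $\X_i$ has its $\tau$-tolerance sensing zone $\X_i+\tautol{\St}$ intersect $\C$ if and only if there exists $\s\in\tautol{\St}$ and $\ct\in\C$ such that $\X_i+\s=\ct$, i.e.\ $\X_i = \ct-\s \in \C\oplus(-\tautol{\St})$. Because $\tautol{\St}=\B(\ob,R_\sen(\tau))$ is symmetric about the origin, $-\tautol{\St}=\tautol{\St}$, so the condition is $\X_i\in\C\oplus\tautol{\St}$. Therefore the number of sensors whose $\tau$-tolerance zones hit $\C$ equals $\Psi(\C\oplus\tautol{\St})$, and $\mintersectprob=\prob{\Psi(\C\oplus\tautol{\St})=m}$.

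Next I would compute the Minkowski sum explicitly when $\C=\B(\ob,r)$. Since both sets are balls centred at the origin, a routine application of the triangle inequality gives
\begin{align*}
\B(\ob,r)\oplus\B(\ob,R_\sen(\tau))=\B(\ob,r+R_\sen(\tau)),
\end{align*}
so the target set is a ball of area $\pi(r+R_\sen(\tau))^2$.

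Finally I would invoke the defining property of a homogeneous PPP of intensity $\lambda$: for any Borel set $\set{A}$, $\Psi(\set{A})$ is Poisson-distributed with mean $\lambda|\set{A}|$. Substituting $|\set{A}|=\pi(r+R_\sen(\tau))^2$ yields exactly the claimed expression.

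I do not expect a real obstacle here. The only non-trivial ingredient is the Minkowski-sum characterisation of when a sensor's zone meets $\C$; this is standard in stochastic geometry (cited in the paper as \cite{haenggibook}) but relies on the origin-symmetry of $\tautol{\St}$, which is worth mentioning explicitly. Everything else is a direct appeal to the Poisson distribution of PPP counts.
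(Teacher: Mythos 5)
Your proposal is correct and follows essentially the same route as the paper: reduce the intersection event to a Poisson count on $\C\oplus\tautol{\St}$, identify that set as $\B(\ob,r+R_\sen(\tau))$, and apply the Poisson count distribution of the homogeneous PPP. The only addition is your explicit remark that the reduction uses the origin-symmetry of $\tautol{\St}$ (so that $-\tautol{\St}=\tautol{\St}$), a detail the paper leaves implicit by citing \cite{haenggibook}.
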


\begin{remark}\label{thm:4}
	The density $\lambda^{\mathrm{opt}}$ that maximizes the $\tau$-tolerance $m$-intersection probability is equal to
\[\lambda^{\mathrm{opt}}=\frac{m}{\pi (R_\sen(\tau)+r)^2}.\]
	and the maximum value of $\mintersectprob$ is
	\begin{align}
	\mintersectprob_\mathrm{max}={me^{-m}}/{m!}.
	\end{align}
\end{remark}
There are applications where sensors cooperatively decide the value of the environmental variable and there may a minimum limit on number of sensors require to build a consensus. 
The $m$-sensed area fraction and $m$-intersection probability are useful metrics for these cases. 
Maximizing the  metrics $\mintersectprob$ or $\mcoverprob$ for a particular value of $m$ may help optimizing the performance of network to build the optimal consensus among sensors. 


\subsection{Intersection probability}
We now derive the $\tau$-tolerance intersection probability ($\intersectprob$) which is defined as the probability that $\C$ has non-empty intersection with the $\tau$-tolerance sensing zone of at least one sensor \ie
\[\intersectprob=\prob{\Psi(\tautol{\St}\oplus\C)\ge 1}=\sum_{m=1}^\infty \mintersectprob.\]
\begin{corollary}\label{one-intersection-prob}
The $\tau$-tolerance intersection probability for $\C=\B(0,r)$ is
	\begin{align}\label{at_least_1_cover}
\intersectprob &=1-e^{-\pi \lambda (r+R_\sen(\tau))^{2}}.
	\end{align}
\end{corollary}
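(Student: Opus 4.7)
The plan is to reduce the statement to the void probability of a Poisson point process. By definition, $\intersectprob = \prob{\Psi(\tautol{\St}\oplus\C)\ge 1}$, and the event $\{\Psi(\tautol{\St}\oplus\C)\ge 1\}$ is precisely the complement of $\{\Psi(\tautol{\St}\oplus\C) = 0\}$. So my first step is to write
\[
\intersectprob = 1 - \prob{\Psi(\tautol{\St}\oplus\C) = 0}.
\]

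Next, I would evaluate $\prob{\Psi(\tautol{\St}\oplus\C) = 0}$ using the same reasoning that yielded Theorem \ref{thm2}. For $\C = \B(\ob,r)$, the Minkowski sum satisfies $\tautol{\St}\oplus\C = \B(\ob, R_\sen(\tau)+r)$, a disc of area $\pi(R_\sen(\tau)+r)^2$. Since $\Psi$ is a homogeneous PPP of intensity $\lambda$, the number of points in this disc is Poisson with mean $\lambda\pi(R_\sen(\tau)+r)^2$, and so its void probability is $\exp(-\lambda\pi(R_\sen(\tau)+r)^2)$. Substituting gives the claimed expression. Equivalently, this is just the $m=0$ case of \eqref{m_cov_FHPPP} subtracted from one.

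As a sanity check, I would also verify the identity by summing the series $\sum_{m=1}^\infty \mintersectprob$ directly from Theorem \ref{thm2}: pulling out the exponential prefactor and recognizing $\sum_{m=1}^\infty x^m/m! = e^x - 1$ with $x = \lambda\pi(R_\sen(\tau)+r)^2$ immediately reproduces $1-e^{-\lambda\pi(R_\sen(\tau)+r)^2}$, confirming the two routes agree.

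There is no genuine obstacle here; the content is entirely a one-line consequence of Theorem \ref{thm2} together with the elementary fact that the probabilities $\mintersectprob$ over $m\ge 0$ sum to one. The only thing to be careful about is the index of summation starting at $m=1$ (not $m=0$), which is exactly what makes the complement/void-probability formulation the cleanest presentation.
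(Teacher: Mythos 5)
Your proposal is correct and matches the paper's (implicit) argument: the paper defines $\intersectprob=\sum_{m=1}^\infty\mintersectprob$ and obtains the corollary directly from Theorem \ref{thm2}, which is exactly your series summation, and your void-probability formulation $1-\prob{\Psi(\tautol{\St}\oplus\C)=0}$ is the same computation viewed as the complement of the $m=0$ term. No gaps.
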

\subsection{m-cover probability}
We now derive the
$\tau$-tolerance $m$-cover probability $\mcoverprob$ which is defined as the probability that $\C$ lies entirely inside the $\tau$-tolerance sensing zone of exactly $m$ sensors \ie
\begin{align}
\mcoverprob&=\prob{\left(\sum_{\X_i\in\Psi} \indside{\C \subset \X_i+\tautol{\St}}\right)=m}.
\end{align}
Now, the location of the sensors that can fully cover $\C$ are the one inside the Minkowski difference $\tautol{\St}\ominus\C$.  
Hence, $\mcoverprob$ is equal to the probability that there are exactly $m$ sensors in $\tautol{\St}\ominus\C$ \ie 
	\begin{align*}
	\mcoverprob=\mathbb{P}\left[\Psi(\tautol{\St}\ominus\C)=m\right].
	\end{align*}
	Now, $\C=\B(\ob,r)$. Note that $\tautol{\St}\ominus\C=\B(\ob,R_\sen(\tau))\ominus\B(\ob,r). $ This is equal to $\B(\ob,R_\sen(\tau)-r)$ if $R_\sen(\tau)>r$, otherwise is equal to the null set $\phi$. Substituting the value, we get 	
	\begin{align*}
	\mintersectprob=\mathbb{P}\left[\Psi(\B(\ob,R_\sen(\tau)-r))=m\right].
	\end{align*}
	Now, since $\Psi$ is a PPP, we get the following theorem.
\begin{theorem}
The $\tau$-tolerance $m$-cover probability $\mcoverprob$ is
	\begin{align}\label{m_cov_FPPP}
	\mcoverprob =&\indside{R_\sen(\tau)>r} 
	\exp{\left(-\lambda  \pi{\left(R_\sen(\tau)-r\right)}^2\right)}\times 
	\nonumber\\&\ \ \frac{(\lambda\pi {\left(R_\sen(\tau)-r\right)}^2 )^m}{m!}.
	\end{align}
	Note that a sensor can only cover $\B(\ob,r)$ if $R_\sen(\tau)>r$.
\end{theorem}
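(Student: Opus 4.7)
The plan is to mirror the intersection-probability argument but replace ``has non-empty intersection with'' by ``fully contains.'' The indicator $\indside{\C \subset \X_i + \tautol{\St}}$ depends on $\X_i$ only through a deterministic set, so the whole event reduces to a counting event for the PPP $\Psi$, after which the standard Poisson pmf finishes the job.

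First, I would identify the set of admissible sensor locations. A sensor at $\X_i$ fully covers $\C$ iff $\y - \X_i \in \tautol{\St}$ for every $\y\in\C$. Since $\tautol{\St} = \B(\ob,R_\sen(\tau))$ is symmetric about the origin, this is the defining property of the Minkowski erosion, so the admissible locations form $\mathsf{D} := \tautol{\St}\ominus\C$. Consequently
\begin{align*}
\sum_{\X_i\in\Psi}\indside{\C\subset\X_i+\tautol{\St}} = \Psi(\mathsf{D}),
\end{align*}
and therefore $\mcoverprob=\prob{\Psi(\mathsf{D})=m}$.

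Second, I would evaluate $\mathsf{D}$ explicitly for the two-ball geometry $\C=\B(\ob,r)$, $\tautol{\St}=\B(\ob,R_\sen(\tau))$. A direct check of the defining condition $\X_i+\B(\ob,r)\subset\B(\ob,R_\sen(\tau))$ reduces to $\dist{\X_i}\le R_\sen(\tau)-r$. Hence $\mathsf{D}=\B(\ob,R_\sen(\tau)-r)$ when $R_\sen(\tau)>r$ and $\mathsf{D}=\emptyset$ otherwise, yielding $|\mathsf{D}|=\pi(R_\sen(\tau)-r)^2\indside{R_\sen(\tau)>r}$.

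Third, since $\Psi$ is a homogeneous PPP of intensity $\lambda$, the count $\Psi(\mathsf{D})$ is Poisson with mean $\lambda|\mathsf{D}|$, so $\prob{\Psi(\mathsf{D})=m}=e^{-\lambda|\mathsf{D}|}(\lambda|\mathsf{D}|)^m/m!$; plugging the value of $|\mathsf{D}|$ (and noting the indicator forces the expression to $0$ when $R_\sen(\tau)\le r$ and $m\ge 1$) gives the claimed formula. The only mildly delicate step is the second one: one must be careful that the Minkowski erosion of two concentric balls is itself a (possibly empty) concentric ball of the ``radius difference,'' and that the degenerate case $R_\sen(\tau)\le r$ is correctly captured by the indicator rather than by some spurious point mass at $m=0$. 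Everything else is routine PPP bookkeeping, already used in the proof of Theorem \ref{thm2}.
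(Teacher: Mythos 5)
Your proposal is correct and follows essentially the same route as the paper: identify the admissible sensor locations as the Minkowski erosion $\tautol{\St}\ominus\C$, evaluate it for concentric balls as $\B(\ob,R_\sen(\tau)-r)$ when $R_\sen(\tau)>r$ (and the empty set otherwise), and apply the Poisson count distribution. Your extra care about the degenerate case $R_\sen(\tau)\le r$ and the $m=0$ point mass is a minor refinement the paper glosses over, but the argument is the same.
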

%
%
	
%
%

\subsection{Cover probability}
We now derive the $\tau$-tolerance cover probability ($\coverprob$) which is defined as the probability that $\C$ lies entirely inside the $\tau$-tolerance sensing zone of at least one sensor \ie
\[\coverprob=\prob{\Psi(\tautol{\St}\ominus\C)\ge 1}=\sum_{m=1}^\infty \mcoverprob.\]
\begin{corollary}\label{one-cover-prob}
The $\tau$-tolerance cover probability for $\C=\B(0,r)$ is
	\begin{align}\label{at_least_1_cover}
\coverprob &=\indside{R_\sen(\tau)>r}  (1-e^{-\pi \lambda (R_\sen(\tau)-r)^{2}}).
	\end{align}
\end{corollary}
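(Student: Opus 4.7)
The plan is to leverage the $m$-cover probability formula already established in the preceding theorem and simply sum over $m \geq 1$, interpreting the sum as a complementary PPP void probability.

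First, I would start from the definition $\coverprob = \prob{\Psi(\tautol{\St}\ominus\C)\ge 1}$, which is equivalent to $1 - \prob{\Psi(\tautol{\St}\ominus\C)= 0}$. Since $\Psi$ is a homogeneous PPP of intensity $\lambda$, the void probability over any measurable set $\mathsf{A}$ is $\exp(-\lambda |\mathsf{A}|)$. Thus the problem reduces to evaluating $|\tautol{\St}\ominus\C|$ for $\C=\B(\ob,r)$.

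Next, I would split into two cases based on the sign of $R_\sen(\tau)-r$. When $R_\sen(\tau) \le r$, the Minkowski difference $\B(\ob,R_\sen(\tau))\ominus\B(\ob,r)$ is the empty set (a sensor whose $\tau$-tolerance zone has radius no larger than $r$ cannot fully contain $\B(\ob,r)$ regardless of its position), so $\mcoverprob = 0$ for every $m \geq 1$ and $\coverprob = 0$, which matches the indicator $\indside{R_\sen(\tau)>r}$. When $R_\sen(\tau)>r$, the Minkowski difference equals $\B(\ob,R_\sen(\tau)-r)$ with area $\pi(R_\sen(\tau)-r)^2$, so the void probability is $\exp(-\pi\lambda(R_\sen(\tau)-r)^2)$ and its complement gives the stated expression.

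Alternatively, and perhaps more transparently given the theorem statement, I would directly sum the Poisson pmf from the previous theorem:
\begin{align*}
\sum_{m=1}^\infty \mcoverprob = \indside{R_\sen(\tau)>r}\sum_{m=1}^\infty e^{-\lambda\pi(R_\sen(\tau)-r)^2}\frac{(\lambda\pi(R_\sen(\tau)-r)^2)^m}{m!},
\end{align*}
and recognize the tail sum of a Poisson pmf as $1 - e^{-\lambda\pi(R_\sen(\tau)-r)^2}$, yielding the claim.

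There is essentially no obstacle here; the only mildly delicate point is to justify that the Minkowski difference of two concentric balls is empty when the "inner" one has the larger radius, and equals the concentric ball of radius equal to the difference otherwise. Once that geometric fact is invoked, the PPP void probability gives the result immediately.
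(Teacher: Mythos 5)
Your proposal is correct and matches the paper's (implicit) argument: the paper defines $\coverprob=\sum_{m=1}^\infty\mcoverprob$ and obtains the corollary by summing the Poisson probabilities over the eroded set $\tautol{\St}\ominus\C$, which is exactly your second route and is equivalent to your void-probability computation. The case split on $R_\sen(\tau)\le r$ versus $R_\sen(\tau)>r$ is handled the same way via the indicator, so there is nothing to add.
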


%


\section{Numerical Results}
In this section,  we will numerically evaluate the presented analysis to find the impact of different system parameters on the performance of WSN. For numerical simulation purposes, we  consider the following  tolerance function $f(\cdot,\cdot)$:
\begin{align}\label{envoirnmnetalvari}
f(\dist{\x-\y},w)=\begin{cases}
Ae^{w\dist{\x-\y}}& \text{if } \x \ne \y,\\
0 & \text{if } \x=\y
\end{cases}
\end{align} 
where $w>0$, is the spatial variation rate of  $\Theta$. This form of $f$ is inspired by the relation of soil temperature at the surface and at a given depth presented in \cite{kaushik1965preliminary}. 
For this function, $\tau$-tolerance sensing radius $R_\sen(\tau,w)$ is given as
\begin{align}
	R_\sen(\tau,w) =
	\begin{cases}   
	\ln(\tau/A)/w & \text{if }  \tau>A  \\ 
	0                  & \text{if } 0\le \tau\le A                          
	\end{cases}      . 
\end{align} 
Note that lower value of $w$ corresponds to slow variation of $\Theta$ and hence large region can be sensed within a certain tolerance. On the other hand, higher value of $w$ corresponds to fast variation of $\Theta$ resulting in a high uncertainty in sensed data beyond the sensing range of $R_\sen$. The  case $w=\infty$ corresponds to complete uncertainty beyond $R_\sen$ and hence, $R_\sen(\tau,\infty)=0$  resulting in no additional coverage.\\
\noindent\textbf{{{{Impact of sensor density on $\tau$-SAF: }}}}
Fig. \ref{FAC_ppp} shows the variation of $\tau$-SAF with the sensor density $\lambda$. The maximum value of $\tau$-SAF is 1 (corresponding to the case when $\C\subset \tautol{\xi}$). It can be observed that increasing $\lambda$ increases both $\nu(\tau)$ and $\nu(0)$, however, the relative gain decreases with $\lambda$ beyond a certain density.  When sensing tolerance is allowed, for moderate value of $\lambda$, the use of spatial profile information can give us significant gain (up to 76\% for $w=.01$ in Fig. \ref{FAC_ppp}) in the average sensed area. If we fix the target SAF at a certain value (for example, $\gamma=0.8$), the required density to achieve this target SAF can be reduced (82/km$^2$ to 8/km$^2$ for $w=.01$ in Fig. \ref{FAC_ppp}) owing to the reduction in data redundancy. If spatial variation is slower, a less sensor density would be required due to increased redundancy.

\begin{figure}[ht!]
	\def\svgwidth{\textwidth}
	\centering
	\includegraphics[width=.5\textwidth]{tau_saf_n1.pdf}
	\caption{$\tau$-tolerance sensed area fraction for WSN with $R_\sen=80$. Here $A=1$ and allowed tolerance $\tau=10$. Allowing tolerance can reduce the required density to achieve the same target SAF, which can help reducing cost and improve the lifetime of WSN.} \label{FAC_ppp}
	\centering
	\includegraphics[width=.5\textwidth]{m_vacancy_1.pdf}
	\caption{$\tau-$tolerance at-most-$m$-sensed area fraction and vacancy $\nu_\ob(\tau)$ for a WSN with $\tau=5, w=.01, R_\sen=150, A=1$. 
		$\nu_\ob(\tau)$  reduces considerably with spatial correlation information.} \label{m_vacancy}	
\end{figure}
\begin{figure}[ht!]
	\def\svgwidth{\textwidth}
	\centering
	\includegraphics[width=.5\textwidth]{m_cover_1.pdf}
	\caption{$m$-intersection probability $\mintersectprob$ of  a WSN with $w=.01, R_\sen=150$ for $\C=\B(0,r)$. The maximum value of $\mintersectprob$ does not depend on the size of $\C$.} \label{m_cover_ppp}
	\def\svgwidth{\textwidth}
	\centering
	\includegraphics[width=.5\textwidth]{beta_m_vary_r.pdf}
	\caption{$m$-cover probability of a WSN with $w=.01, R_\sen=150$ for set $\C=\B(0,r)$.  With use of spatial information, it is possible to cover sets with higher radius.} \label{m_cover_ppp_F}
\end{figure}

\noindent\textbf{Impact of the sensor density on $\tau$-tolerance at-most-$m$-sensed area $\nu_{m}(\tau)$ and vacancy $\nu_\mathrm{o}(\tau)$}:
Fig. \ref{m_vacancy} shows the variation $\nu_{m}(\tau)$ with the  sensor density $\lambda$. It can be observed that $\nu_{m}(\tau)$ increases with $m$ (consistent with \eqref{expressionVm(c)}). As we increase $\lambda$, $\nu_m(\tau)$ first increases. However, there is an optimal density beyond which, $\nu_m(\tau)$ starts decreasing. This is because  an increase in the density beyond a certain point is not helpful,  due to increased redundancy in sensing data.  However, with spatial correlation this redundancy is reached beforehand which can be seen when comparing the optimal density value for the curves $\nu_5(\tau)$ and $\nu_5(0)$. 
Fig. \ref{m_vacancy} also shows the variation of average vacancy $\nu_\ob(\tau)$ with  $\lambda$. It can be seen that the vacancy decreases with $\lambda$. However, when spatial information is used, vacancy is reduced. 


\noindent\textbf{{Effect of the sensor density on $m$-intersection probability:}} 
Fig. \ref{m_cover_ppp} shows the variation of $m$-intersection probability with  $\lambda$. It can be observed that there exists an optimal value $\lambda^{\mathrm{opt}}$ of density for a particular value of $m$. Below $\lambda^{\mathrm{opt}}$,  $\mintersectprob$ is less as there are not enough sensors to intersect $\C$. When density  is higher than $\lambda^{\mathrm{opt}}$, there are more than required sensors which creates redundancy and therefore, $\mintersectprob$ is also less for this case.  Sensors can  decide which $m$ can be sufficient for the coverage of $\C$. In the case when there are more number of sensors than required, some sensors may choose to turn-off themselves. It can be observed that the maximum value of $\mintersectprob$ does not depend on   the radius of $\C$. The value of $r$ only shifts the value of $\lambda^{\mathrm{opt}}$.  Increasing $r$ or tolerance $\tau$ decreases the value of $\lambda^{\mathrm{opt}}$ that can also be seen from Remark \ref{thm:4}. This is due to the fact that data redundancy is reached earlier when spatial correlation is used or when sensors with larger sensing range are used.\\
\noindent\textbf{{Variation of  $m$-cover probability with the size of $\C$:}} 
Fig. \ref{m_cover_ppp_F} shows the variation of $m$-cover probability with  the radius $r$ of set $\C$. It can be observed that it is not possible to cover $\C$ when $r>R_\sen$ with 0 tolerance. However, with use of spatial information, it is possible to cover sets with higher radius (200-300 in this example) with some tolerance.
\section{Conclusion}
In this paper, we consider a WSN deployed to sense a target environment variable. We showed that  the sensing performance of a WSN can be improved using the information on spatial profile of the target environment variable while allowing some error tolerance in the sensing accuracy.  We also saw that the required density to achieve a certain sensing performance can be reduced when  the information on spatial profile of the target environment variable is used. This can help us reduce the cost of network, both- capital (by deploying less sensors) and operating (by keeping some sensors off to save their life-time) costs.	
\bibliographystyle{ieeetran}

\vspace{12pt}
\color{red}

\end{document}